\documentclass[11pt]{article}
\usepackage{graphicx} 

\usepackage{amsmath}
\usepackage{amssymb}

\usepackage{pre}
\usepackage[colorinlistoftodos,prependcaption,textsize=tiny]{todonotes}

\title{Shor's algorithm requires Fanout}

\def\ANON{0} 
\def\COMM{0} 

\ifnum\ANON=1
\author{
  Anonymous Authors.
}
\else
\author{
  Lucas Gretta \thanks{University of California at Berkeley. \ Email: \url{lucas_gretta@berkeley.edu}. \ Supported by NSF Award CCF-2231095.}
  \and
  Malvika Raj Joshi\thanks{University of California at Berkeley. \ Email: \url{malvika@berkeley.edu}. \ Supported by NSF grant 2311733 and DOE grant DE-SC0024124.}}
\fi
\date{}

\ifnum\COMM=1
\newcommand{\malvika}[1]{\textcolor{purple}{[\textbf{Malvika:} {#1}]}}
\newcommand{\luke}[1]{\textcolor{blue}{[\textbf{Luke:} {#1}]}}
\newcommand{\meghal}[1]{\textcolor{red}{[\textbf{Meghal:} {#1}]}}
\newcommand{\gpt}[1]{\textcolor{orange}{[\textbf{GPT:} {#1}]}}
\newcommand{\todomal}[1]{\todo[linecolor=Plum,backgroundcolor=Plum!25,bordercolor=Plum]{\textbf{@mal todo}: #1}}
\newcommand{\todoluke}[1]{\todo[linecolor=blue,backgroundcolor=blue!25,bordercolor=blue]{\textbf{@luke todo}:#1}}
\newcommand{\todomeghal}[1]{\todo[linecolor=red,backgroundcolor=red!25,bordercolor=red]{\textbf{@meghal todo} #1}}
\newcommand{\todoany}[1]{\todo[linecolor=orange,backgroundcolor=orange!25,bordercolor=orange]{\textbf{todo @plz do}: #1}}
\else
\newcommand{\malvika}[1]{}
\newcommand{\luke}[1]{}
\newcommand{\meghal}[1]{}
\newcommand{\gpt}[1]{}
\newcommand{\todomal}[1]{}
\newcommand{\todoluke}[1]{}
\newcommand{\todomeghal}[1]{}
\newcommand{\todoany}[1]{}
\fi

\begin{document}

\maketitle
\ifnum\ANON=1
\PackageWarningNoLine{Global}{Note anonymous mode}
\ifnum\COMM=1
\textcolor{red}{\textbf{Warning:}{ Comments are enabled in anonymous mode}}
\fi
\fi
\ifnum\COMM=1
\PackageWarningNoLine{Global}{Note comments are enabled}
\fi

\begin{abstract}
Shor's algorithm is a canonical quantum supremacy target whose core operation relies on the Quantum Fourier Transform (QFT). In this note, we answer the open question of \cite{fang2006qlb} by proving that approximating QFT in constant depth, for any $n$-qubit modulus, necessarily requires the $n$-qubit $\FANOUT$ operation. 

Formally, let $\QFT_q$ be the gate acting on $n = \lceil \log q \rceil$ qubits that computes the QFT under modulus $q$. It is known that any $n$-qubit $\QFT_q$ can be implemented in constant depth using $\FANOUT_n$, i.e. $\QFT_q \in \QAC^0_f$ \cite {hoyer2005fanout}. We prove the converse by using a $\QFT_q$ gate to construct a state of ``non-negligible felinity" \cite{gretta2026paritynotinqac0iff}. Consequently, $\QFT_q \in \QACZ \iff \FANOUT_n \in \QACZ$.  

In the case of $q = 2^n$, such as in Shor's, we approximate $\FANOUT_n$ using a single $\QFT_{2^n}$ gate and $O(1)$ two-qubit local gates, thus tying the feasibility of realizing Shor's  algorithm with NISQ circuits to that of $\FANOUT$.

\end{abstract}

\section{Introduction}
The Quantum Fourier Transform ($\QFT_q$) requires $\Theta(\log n)$ depth using only local gates \cite{qncqftlb}. In 2002, Høyer and Špalek showed that $\QFT_q$ can be performed in $O(1)$ depth if this gateset is extended to include $\FANOUT$ \cite{hoyer2005fanout}. 
However, it is yet unknown if $\QFT_q$ is implementable by potentially weaker nonlocal interactions. In particular, whether $\QFT_q \in \QACZ$ is open. We show this question is equivalent to Moore's question of whether $\FANOUT_n \in \QACZ$ \cite{moore1999qac0}. This affirmatively resolves the open question posed by Fang, Fenner, Green, Homer and Zhang in 2006: ``Is fanout really necessary to do the quantum
Fourier transform in constant depth?'' \cite{fang2006qlb}. 

For additional context on $\QACZ$ and $\QACZF$, we refer the reader to \cite{green2002acc, rosenthal2021qac0, grier2026tc0} and \cite{gretta2026paritynotinqac0iff}. 

\begin{theorem}[QFT implies Fanout]\label{thm:qftfan} 
For any $q \in \mathbb{N}$ and $n = \lceil \log_2 q \rceil$, if there exists a $\QACZ$ circuit $C$ approximating $\QFT_q$ to a non-negligible fidelity $\delta$, then $\FANOUT_n \in \QACZ$.
\end{theorem}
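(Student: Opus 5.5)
The plan is to use the single $\QFT_q$ approximation $C$ to prepare, in $\QACZ$, a state of non-negligible felinity in the sense of \cite{gretta2026paritynotinqac0iff}. We then invoke the main result of that paper: if a $\QACZ$ circuit prepares a state with non-negligible felinity (non-negligible overlap with a cat-like state $\frac{1}{\sqrt2}(|x\rangle+|\bar x\rangle)$, up to local phases), then $\FANOUT_n \in \QACZ$. Since that theorem already handles amplification, a non-negligible fidelity $\delta$ of $C$ to $\QFT_q$ should only cost a non-negligible loss in felinity.

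The construction runs as follows. Start from $|0^n\rangle$ and prepare a computational basis state $|j\rangle$ with $j\approx q/2$, using only $X$ gates. Apply $\QFT_q$ to get
\[
\frac{1}{\sqrt q}\sum_{k=0}^{q-1}\omega_q^{jk}|k\rangle,
\]
and with the right choice of $j$ the phase $\omega_q^{jk}$ is approximately $(-1)^k$. Alternatively, apply $\QFT_q^{\dagger}$ to a uniform-like product state, which concentrates amplitude near $k=0$ and $k\approx q/2$.

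For $q=2^n$ this is exact. $H^{\otimes n}$ followed by $\QFT^{\dagger}_{2^n}$ gives $|0\rangle$, so we need a different input. The goal is to choose a product state $|\psi\rangle=\bigotimes_i|\psi_i\rangle$, prepared by single-qubit gates, whose image under the inverse QFT is supported on the two strings $0^n$ and $10^{n-1}$ (i.e.\ $0$ and $q/2$). Such a state, $\frac{1}{\sqrt q}\sum_k (1+(-1)^k)|k\rangle/\sqrt2$, is indeed a product: the QFT of $\frac{1}{\sqrt2}(|0\rangle+|q/2\rangle)$ is $|{+}\rangle^{\otimes (n-1)}\otimes|0\rangle$ up to bit ordering. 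That image is not itself a cat state. To get one, use instead the QFT's product-form characterization, namely that $\QFT_{2^n}|x\rangle=\bigotimes_\ell(|0\rangle+e^{2\pi i x/2^\ell}|1\rangle)$. Then feed $\QFT^{\dagger}$ a superposition whose Fourier preimage puts weight on the strings $0^n$ and $1^n$, i.e.\ $x=0$ and $x=2^n-1$. This preimage is $\frac{1}{\sqrt2}(|{+}\rangle^{\otimes n}+\bigotimes_\ell(|0\rangle+e^{-2\pi i/2^\ell}|1\rangle))$, which is not a product state.

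The approach I would actually pursue is therefore to apply $\QFT$ to $|0^{n-1}\rangle\otimes|+\rangle$ or a similar low-entanglement input. Choose the input so that the output phases $e^{2\pi i x/2^\ell}$, as $x$ ranges over two values, agree on all but a controlled set of qubits. Then finish with $O(1)$ local two-qubit gates and single-qubit rotations so that the result has constant overlap with a GHZ state, up to local phases. The paper's abstract suggests this is possible with $O(1)$ extra gates for $q=2^n$.

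For general $q$, the same idea applies approximately. $\QFT_q|x\rangle$ is close to, but not exactly, a product state. I would quantify felinity by a Fourier-sum estimate of the form $\bigl|\frac1q\sum_{k}\omega_q^{(a-b)k}\bigr|$, showing the relevant overlap is bounded below by a constant independent of $n$. The final step is to propagate the $\delta$ error: fidelity $\delta$ on the relevant input implies overlap at least roughly $\sqrt\delta\cdot c - o(1)$, and the felinity theorem then gives $\FANOUT_n\in\QACZ$.

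The main obstacle is designing an input/post-processing pair that uses only $O(1)$ local gates yet yields overlap bounded away from zero with a cat state, uniformly over $n$ and over non-power-of-two $q$. The non-product structure of $\QFT_q$ for general $q$ makes this delicate, and I would first try inputs $|0\rangle$ and $|\lfloor q/2\rfloor\rangle$-like states and bound the resulting exponential sums directly.
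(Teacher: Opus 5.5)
Your proposal has two genuine gaps. The first you flag yourself: it never exhibits the input state. The routes you try either spread the output uniformly (the basis-state input $|j\rangle$) or insist that the Fourier image be supported exactly on two complementary strings, which forces an entangled preimage. You also aim for overlap with a GHZ state after some unspecified $O(1)$-gate post-processing. Neither requirement is needed. Felinity, $2\sum_{y}\langle y|\rho|y\rangle\langle \bar y|\rho|\bar y\rangle$, is a purely diagonal quantity, so it suffices that the output carry $\Omega(1)$ amplitude on two complementary strings. (The abstract's $O(1)$ local gates belong to the separate Shor-modulus lemma, not to this theorem.)

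The missing idea is that a linear-phase product state has a Dirichlet-kernel Fourier profile. Centering it between two \emph{adjacent} frequencies with complementary binary expansions therefore puts constant weight on both. The paper takes $N=2^n\le q<2N$, an $(n+1)$-qubit register, and the frequencies $N \bmod q$ and $N-1$, which are $10\cdots0$ and $01\cdots1$ when $q>N$. It uses the product state $|\psi_q\rangle=\sqrt{2/N}\sum_{x<N/2}\omega_q^{-Nx}|x\rangle$. Then $\QFT_q|\psi_q\rangle$ has amplitude $\sqrt{N/(2q)}\ge 1/2$ on the first frequency and $\sqrt{2/(qN)}\,|D_q(N/2)|\ge 1/\pi$ on the second, uniformly in $q$. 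Restricting to $x<N/2$ is what keeps the Dirichlet sum from cancelling when $q$ is close to $N$. (For $q=2^n$ on $n$ qubits, $N^{-1/2}\sum_{x<N}\omega_{2N}^{x}|x\rangle$ likewise puts amplitude at least $2/\pi$ on both $0^n$ and $1^n$.)

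The second gap is the error propagation, which is where the real work lies. Your claim that fidelity $\delta$ on the relevant input yields overlap roughly $c\sqrt{\delta}-o(1)$ with the target strings is false unless $\delta$ is close to $1$. The hypothesis only gives $C|\psi\rangle=\alpha\,\QFT_q|\psi\rangle+\beta|\chi\rangle$ with $|\alpha|\ge\sqrt{\delta}$ and $|\chi\rangle\perp\QFT_q|\psi\rangle$, otherwise arbitrary. For $\delta=n^{-O(1)}$, nothing stops $\beta|\chi\rangle$ from cancelling the $\Theta(\sqrt{\delta})$ amplitude on either target string. Felinity needs both strings at once, since it multiplies the two probabilities. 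The felinity theorem only amplifies \emph{after} non-negligible felinity is in hand, so it cannot absorb this loss.

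The paper's fix is to invoke the hypothesis only on inputs whose ideal image is a single basis state, namely $|\phi_{y_b}\rangle=\QFT_q^\dagger|y_b\rangle$. There it directly gives $|\langle y_b|C|\phi_{y_b}\rangle|\ge\sqrt{\delta}$. Concretely, it runs $C$ on $R_0'R_1'|\psi_q\rangle|\mathrm{EPR}\rangle_{w_1w_2}$, where $R_b'$ reflects about $|\phi'_{y_b}\rangle\otimes|+\rangle_{w_1}|b\rangle_{w_2}$. Because $|1-b\rangle|b\rangle\perp|\mathrm{EPR}\rangle$, the amplitude on $|y_b\rangle|1-b\rangle|b\rangle$ comes only from the term $2\langle y_b|C|\phi_{y_b}\rangle\langle\phi_{y_b}|\psi_q\rangle\cdot\Theta(1)$, up to the error from $\phi'\approx\phi$. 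This is $\Omega(\sqrt{\delta})$ for both $b$, giving felinity $\Omega(\delta^2)$.

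Making this $\QACZ$ needs two further ingredients absent from your plan. First, you need $\QACZ$-preparable approximations $|\phi'_y\rangle$ of the Fourier states, which are not product states for $q\ne 2^n$; the paper uses an $O(\log n)$-bit comparator plus amplitude amplification on the top bits. Second, you need the fact that reflections about $\QACZ$-preparable states are themselves in $\QACZ$. The ancilla marking is what separates the two branches without a controlled preparation of $|\phi_{y_b}\rangle$, which would itself require fanout.
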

\noindent It follows from known results that $\FANOUT_n \in \QACZ \Rightarrow \QFT_q \in \QACZ$. Although Høyer–Špalek was approximate as stated, it can be made exact by combining with \cite{mosca2003exactquantumfouriertransforms} (\cref{sec:ampamp}). Thus,  as corollary, $\QFT_q \in \QACZ \iff \FANOUT_n \in \QACZ$.

\paragraph{Proof Overview.}
The key is to prepare a non-negligible ``felinity'' state using $C$, which by \cite{gretta2026paritynotinqac0iff} implies exact $\FANOUT_n$. 
The proof starts with a $n+1$-qubit product state whose $\QFT_q$ ($2^n \leq q < 2^{n+1}$) has constant amplitude on two adjacent frequencies with complementary binary representations. For instance, the frequencies $2^{n}-1$ and $2^{n}$ have binary representations $01\dots 1$ and $10\dots 0$ respectively. Thus, the $\QFT_q$ of such a state has constant felinity. To use $C$ in place of $\QFT_q$, we first mark the aforementioned branches to isolate them, allowing a $\Omega(\delta^2)$  felinity bound. 

Shor's algorithm \cite{Shor_1997} specifically uses a power-of-two modulus, i.e., a $\QFT^{\dag}_{2^n}$\footnote{or $\QFT_{2^n}$ depending on convention. The convention doesn't materially affect our proof.} gate. Although the above reduction applies, it is technically under $\QACZ$, not $\QNC^0$. In \cref{lem:localfanshor}, we describe a circuit for approximating $\FANOUT_n$ to arbitrary (depth-dependent) precision using only a $\QFT^{\dagger}_{2^n}$, two-qubit controlled-phases and no extra ancillae.
Our circuit  effectively approximates a controlled version of Høyer–Špalek's decrement gate \cite{hoyer2005fanout}. We truncate the resulting controlled-single-qubit-unitaries to the most significant qubits to obtain $O(1)$ depth. This lemma ties the realization of Shor's algorithm in the NISQ era (e.g. \cite{skosana2021demonstration}) to that of $\FANOUT$.
\section{The reductions}
Following \cite{hoyer2005fanout}, define $\omega_q := e^{2\pi i/q}$ and for $x \in \clr{0, \dots q-1}$,
\begin{align}
    \QFT_q \ket{x} := \frac{1}{\sqrt{q}} \sum_{y = 0}^{q-1} \omega_q^{xy} \ket{y}.
\end{align}
Denote $\aqftd$ as any gate approximating $\QFT_q$ to $\delta$ fidelity, i.e, for all normalized $\ket{\psi} \in \mathrm{span} \clr{\ket{0}, \ket{1} \dots \ket{q-1}}$, $\vlr{\braket{\psi | \QFT^\dag_q \cdot \aqftd |\psi}}^2 \geq \delta$. Note this notion captures a much weaker approximation than the typical $(1-\eps)$ fidelity. 

To represent such a \emph{qudit} $\ket{x}$ with $m \geq \lceil \log_2 q \rceil$ \emph{qubits}, we use the binary representation for $x$ with the leftmost qubit being the MSB. For instance, $\ket{2^{m-1}} = \ket{1}\ket{0}^{\otimes m-1}$. 

We define $\FANOUT_n$ as a unitary map satisfying $\ket{0}_t \ket{0}_A^{\otimes n} \mapsto \ket{0} \ket{0}_A^{\otimes n}$ and $\ket{1}_t \ket{0}_A^{\otimes n} \mapsto \ket{1} \ket{1}_A^{\otimes n}$. Some works refer to a generalized map that applies a $\cnot$ from $t$ to each qubit in $A$. These are closely related, as for any $U$ approximating $\FANOUT_n$, a corresponding approximation for generalized fanout exists using $U,U^\dag$ and local gates \cite{rosenthal2021qac0}. $\FANOUT_n \in \QACZ$ also implies $\FANOUT_{\poly(n)} \in \QACZ$. 

We will utilize the following Dirichlet kernel identity. For $D_q(L) := \sum_{r = 0}^{L-1} \omega_q^{r}$,
\begin{align}\label{eq:dirichlet}
    D_q(L) &= e^{i\pi (L-1)/q} \frac{\sin \frac{L \pi}{q}}{\sin \frac{\pi}{q}}.
\end{align}

Finally, we will apply the result of \cite{gretta2026paritynotinqac0iff}: any $\QACZ$ circuit preparing a $n$-qubit state $\rho$ of felinity $\geq n^{-c}$ for a constant $c$, implies $\FANOUT_n \in \QACZ$. The felinity, $\feln(\rho)$ is defined as,
 \begin{align}\label{eq:felin}
     \feln(\rho) := 2 \sum_{\y \in \bin^n} \braket{\y|\rho|\y} \cdot \braket{\y | X^{\tens n} \rho X^{\tens n} | \y}.
 \end{align}

First we show the exact case. 
\begin{lemma}[$\QFT_q$ produces high felinity]\label{lem:exactcase}
Let $N := 2^n$, $n \geq 1$, $N \leq q < 2N$ and $y_0, y_1 \in \clr{0, \dots q - 1}$ such that $y_0 \equiv N \mod q$ and $y_1 \equiv N-1 \mod q$.  
Then, there is a product state $\ket{\psi_q}$ satisfying, $\vlr{\braket{y_0 | \QFT_q | \psi_q} }\geq 1/2$ and $\vlr{\braket{y_1 | \QFT_q | \psi_q}} \geq 1/\pi$.
\end{lemma}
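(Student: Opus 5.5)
I would take $\ket{\psi_q}$ to be a \emph{phase-twisted uniform superposition over a dyadic block}: for $L := N/2 = 2^{n-1}$, set
\begin{align*}
\ket{\psi_q} := \frac{1}{\sqrt{L}} \sum_{x=0}^{L-1} \omega_q^{-xN} \ket{x},
\end{align*}
written on $n+1$ qubits (enough, since $q<2N$). It is a product state. The two most significant qubits are $\ket{0}$. For $x = \sum_{j} b_j 2^j$ with $j<n-1$, the phase factorizes as $\omega_q^{-xN} = \prod_j \omega_q^{-b_j 2^j N}$, so each of the $n-1$ low qubits is $(\ket{0} + \omega_q^{-2^j N}\ket{1})/\sqrt{2}$. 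The phase $\omega_q^{-xN}$ is chosen to shift the peak of the Fourier transform from frequency $0$ to frequency $N$.

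Next I compute amplitudes. By definition $\braket{y|\QFT_q|x} = \omega_q^{xy}/\sqrt{q}$, so
\begin{align*}
\braket{y|\QFT_q|\psi_q} = \frac{1}{\sqrt{qL}} \sum_{x=0}^{L-1} \omega_q^{x(y-N)}.
\end{align*}
\textbf{Frequency $y_0$.} Since $y_0 \equiv N \bmod q$, every term equals $1$. The amplitude is therefore $L/\sqrt{qL} = \sqrt{L/q} > \sqrt{(N/2)/(2N)} = 1/2$.

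\textbf{Frequency $y_1$.} Since $y_1 \equiv N-1$, we have $y_1 - N \equiv -1$, so the sum is $\overline{D_q(L)}$. By \eqref{eq:dirichlet} its magnitude is $\sin(\pi L/q)/\sin(\pi/q)$. Using $\sin(\pi/q) \le \pi/q$ and writing $t := L/q \in (1/4, 1/2]$, this gives
\begin{align*}
\vlr{\braket{y_1|\QFT_q|\psi_q}} \ge \frac{q \sin(\pi L/q)}{\pi\sqrt{qL}} = \frac{\sin(\pi t)}{\pi \sqrt{t}}.
\end{align*}
It remains to show $\sin(\pi t) \ge \sqrt{t}$ on $[1/4,1/2]$. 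Concavity of $\sin(\pi t)$ on $[0,1/2]$ gives $\sin(\pi t)\ge 2t$, and $2t \ge \sqrt{t}$ exactly when $t \ge 1/4$. Hence the amplitude is at least $1/\pi$.

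\textbf{Main obstacle: choosing $L$.} The natural first choice $L=N$ fails. For $q=N$ one gets $\sin(\pi N/q)=0$, so the amplitude on $N-1$ vanishes. Halving the block is exactly what keeps $\pi L/q$ inside $(\pi/4,\pi/2]$, where the Dirichlet kernel is still large. This comes at the price of only the $1/2$ bound at $y_0$.

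\textbf{Edge case $n=1$.} Here $L=1$ and $\ket{\psi_q}=\ket{0}$, and the bounds above apply verbatim: $\sqrt{1/q}\ge 1/\sqrt{3}$ and $\sin(\pi/q)/(\pi\sqrt{1/q})\ge 1/\pi$ for $q\in\{2,3\}$.
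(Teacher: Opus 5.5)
Your proposal is correct and follows the paper's proof: you use the same product state $\frac{1}{\sqrt{N/2}}\sum_{x=0}^{N/2-1}\omega_q^{-Nx}\ket{x}$ with two leading zero qubits, the same direct evaluation at $y_0$, and the same Dirichlet-kernel bound at $y_1$. The only cosmetic difference is the final inequality: the paper uses $\sin(\pi N/(2q)) \geq 1/\sqrt{2}$ and $q \geq N$ separately, while you combine them into $\sin(\pi t) \geq 2t \geq \sqrt{t}$ for $t = N/(2q) \in (1/4, 1/2]$.
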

\begin{proof}
The claimed state is, 
\begin{align} 
\ket{\psi_q} &:= \ket{00} \bigotimes_{j = 2}^{n} \frac{\ket{0}+\omega_q^{-N2^{n-j}} \ket{1}}{\sqrt{2}} = \frac{1}{\sqrt{N/2}} \sum_{x = 0}^{N/2-1} \omega_q^{-Nx}  \ket{x}.
\end{align}
Note the leading $0$s ensure the state is  only supported on the qudits $\clr{0, \dots N/2-1}$ rather than the entire domain of $\QFT_q$, $\clr{0, \dots q-1}$. This gives,
\begin{align}\label{eq:here}
\QFT_q \ket{\psi_q} &= \sqrt{\frac{2}{qN}} \sum_{y = 0}^{q-1} \sum_{x = 0}^{N/2-1} \omega_q^{(y-N)x} \ket{y}.  
\end{align}
\noindent Immediately, $\vlr{\braket{y_0 | \QFT_q | \psi_q}} = \sqrt{\frac{N/2}{q}} \geq \frac{1}{2}$ because $q \leq 2N$. Also, since $1/2 \geq N/ (2q) \geq 1/4$, 
\begin{align}
    \vlr{\braket{y_1 | \QFT_q | \psi_q}} &= \vlr{\sqrt{\frac{2}{qN}} \sum_{x=0}^{N/2-1} \omega^{-x}_q} \\
    &=  \vlr{\sqrt{\frac{2}{qN}} \cdot D_{q}(N/2)} \\
    &= \sqrt{\frac{2}{qN}}  \cdot \vlr{\frac{\sin \frac{\pi N}{2q}}{\sin \frac{\pi}{q}}} \\
        &\geq \sqrt{\frac{2}{qN}} \cdot \frac{q}{\sqrt{2}\pi}  \geq \frac{1}{\pi}.
\end{align}
\end{proof}
\noindent We will also need the following lemma.
\begin{lemma}[Approx. Fourier states $\in \QACZ$]\label{lem:approx}
For $q$ sufficiently large and $n$ such that $2^n \leq q < 2^{n+1}$, and $y \in \clr{0, \dots q-1}$, define $\ket{\phi_y} := \QFT_q^\dag \ket{y}$. For any $\gamma \leq 1$ such that $\gamma \geq n^{-c}$ for some constant $c$, there exists $\ket{\phi'_y}$ preparable in $\QACZ$ satisfying, 
  $\vlr{\braket{\phi'_y | \phi_y}}^2 \geq 1-\gamma$.
\end{lemma}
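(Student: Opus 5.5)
The plan is to write $\ket{\phi_y}$ as a diagonal phase applied to a uniform superposition. Then the only nontrivial task is preparing that uniform superposition over $\clr{0,\dots,q-1}$ approximately in constant depth.

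\textbf{Step 1 (phases are free).} On the $n+1$ qubits representing $x=\sum_j x_j 2^j$ we have
\[
\ket{\phi_y} = \frac{1}{\sqrt q}\sum_{x=0}^{q-1}\omega_q^{-xy}\ket{x},
\qquad
\omega_q^{-xy}=\prod_{j=0}^{n}\omega_q^{-y2^j x_j}.
\]
So $\ket{\phi_y}=P_y\ket{u_q}$, where $P_y$ is a tensor product of single-qubit phase gates and $\ket{u_q}:=q^{-1/2}\sum_{x<q}\ket{x}$. Since $P_y$ is unitary, it suffices to prepare some $\ket{u'}$ in $\QACZ$ with $\vlr{\braket{u'|u_q}}^2\ge 1-\gamma$, and then set $\ket{\phi'_y}:=P_y\ket{u'}$.

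\textbf{Step 2 (truncation to a short prefix).} Let $t:=\lceil\log_2(2/\gamma)\rceil=O(c\log n)$ and $p:=\lfloor q/2^{n+1-t}\rfloor$. Consider
\[
\ket{u'}=\ket{v_p}\otimes\ket{+}^{\otimes(n+1-t)},
\]
where $\ket{v_p}$ is uniform over $\clr{0,\dots,p-1}$ on the top $t$ qubits. This state is uniform over $[0,p2^{n+1-t})\subseteq[0,q)$. Hence
\[
\vlr{\braket{u'|u_q}}^2=\frac{p2^{n+1-t}}{q}\ge 1-\frac{2^{n+1-t}}{q}\ge 1-2^{1-t}\ge 1-\gamma,
\]
using $q\ge 2^n$. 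The problem is thus reduced to preparing a uniform superposition over $[0,p)$ on only $O(\log n)$ qubits.

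\textbf{Step 3 (preparing $\ket{v_p}$ in constant depth).} This is the main obstacle. The natural bit-by-bit preparation goes from MSB to LSB. At bit $i$ it applies a rotation whose angle depends only on whether the prefix so far equals the corresponding prefix of $p$, since that is the only branch that is not yet "free". If the prefix is already strictly smaller, the rotation is $H$. Each such controlled rotation is a single generalized-Toffoli-controlled gate, but the $t$ of them are sequential, giving depth $O(\log n)$.

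My first attempt to flatten this uses independent biased coins. The conditional probability of "still tight at bit $i$" given "tight before $i$" depends only on $i$. So I would prepare a product state of biased coins $c_i$ together with uniform bits $u_i$. The first index where a coin fails marks the deviation point. The bits of $x$ are then AND-type functions of $(c,u)$: $x_i=p_i$ while all earlier coins succeed, $x_i=0$ at the deviation, and $x_i=u_i$ afterwards. These functions are computable by $\QACZ$ with ancillas.

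The difficulty is that $(c,u)\mapsto x$ is not injective, so the leftover garbage would destroy coherence. I would try to fix this by merging the registers, letting the coin at a position $i$ with $p_i=1$ serve directly as $x_i$. If that fails, I would fall back to a construction exploiting that $t=O(\log n)$: any function of $t$ bits is a $\poly(n)$-size DNF and hence in $\QACZ$. I would then try to realize the $t$-qubit preparation unitary via a constant number of layers of uniformly-controlled rotations, each decomposed into $\poly(n)$ commuting multi-controlled gates implemented in parallel with clean ancillas. The step of making these layers commute, or otherwise parallel, is where I expect the real work, and where "$q$ sufficiently large" is used to keep $t$ logarithmic.
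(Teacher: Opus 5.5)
Your Steps 1 and 2 are exactly the paper's reduction. The paper's $\ell$, $k$, $R$ are your $t$, $n+1-t$, $p$, and the paper folds the phases into the two tensor factors rather than applying $P_y$ afterwards, which is equivalent. The gap is Step~3, which is the only nontrivial content of the lemma: preparing the uniform superposition $\ket{v_p}$ over $\{0,\dots,p-1\}$ on $t=O(\log n)$ qubits in constant depth. You do not give an argument for it. The biased-coin map leaves entangled garbage, as you observe yourself. The plan to parallelize layers of uniformly-controlled rotations is explicitly left as ``where the real work is.'' Your aside that any function of $t$ bits is a $\poly(n)$-size DNF ``hence in $\QACZ$'' also needs justification: evaluating many terms in parallel needs many copies of each input bit, and making those copies is itself a fanout.

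The missing idea is to mark and amplify instead of synthesizing $\ket{v_p}$ bit by bit. Apply the fixed-threshold comparator $\ket{x}\ket{0}\mapsto\ket{x}\ket{[x<p]}$ to $\ket{+}^{\otimes t}\ket{0}$. On $t=O(\log n)$ bits this comparator has a size-$O(t)$ $\ACZ$ circuit and is implementable in $\QACZ$ (the paper cites this). The result is $\sqrt{p/2^t}\,\ket{v_p}\ket{1}+\sqrt{1-p/2^t}\,\ket{\mathrm{bad}}\ket{0}$. The key point is that $2^{t-1}\le p<2^t$, which is precisely what $2^n\le q<2^{n+1}$ guarantees. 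So the success probability is known and at least $1/2$, and exact amplitude amplification takes only $O(1)$ rounds. For instance, add one ancilla rotated so that the success probability becomes exactly $1/4$; a single Grover iterate then maps the state exactly to $\ket{v_p}$, with the flag and ancilla in a known basis state. That iterate uses a constant number of comparator calls, single-qubit gates, one controlled-$Z$, and one generalized Toffoli, so it is constant depth. It also removes your coherence problem altogether, because the only computed bit is a flag that ends in a fixed value. Applying $P_y$ then finishes the proof. A minor point: $t=O(\log n)$ comes from $\gamma\ge n^{-c}$, while ``$q$ sufficiently large'' is only needed so that $t\le n+1$.
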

\begin{proof}
The state $\ket{\phi_y} = \frac{1}{\sqrt{q}} \sum_{x=0}^{q-1} \omega_q^{-xy} \ket{x}$ when $q = 2^n$ is simply a product state. In fact, for any integer $a$, any $m$-qubit state of the following form is a product state.  
\begin{align}
    \ket{\eta_m(a)} &:= 2^{-m/2} \sum_{x = 0}^{2^m-1} \omega_q^{ax} \ket{x} 
\end{align}
When $2^n < q < 2^{n+1}$, first observe for any $\ell = O(\log n)$-qubit register and integer $2^{\ell-1} \leq R \leq 2^{\ell}$, the following state is preparable in $\QACZ$ ($O(1)$ depth, $\poly(n)$ ancillae). 
\begin{align}
  \ket{\zeta_{\ell}(R,a)} &:= \frac{1}{\sqrt{R}} \sum_{x = 0}^{R-1} \omega_q^{ax} \ket{x}  
\end{align}
This is because the comparator $\ket{x} \ket{0} \mapsto \ket{x} \ket{\blr{x < R}}$ has a size-$O(\ell)$ $\ACZ$ circuit and is thus implementable in $\QACZ$  \cite{grier2026tc0}. Applying it to $\ket{\eta_{\ell}(a)} \ket{0}$, one obtains,
\begin{align}
    \ket{\zeta'_{\ell}(R,a)} :=  \sqrt{\frac{R}{2^{\ell}}} \cdot \ket{\zeta_{\ell}(R,a)} \ket{1} + \sqrt{1-\frac{R}{2^{\ell}}} \ket{\bad} \ket{0}
\end{align}
which, since $R/2^{\ell} \geq 1/2$, can be amplified in $O(1)$ depth to $\ket{\zeta_{\ell}(R,a)}$ (see Cor. 3.16  \cite{gretta2026polylogarithmicweightdickestatesqac0}). 

Next, set $\ell :=  \lceil \log_2 \frac{2}{\gamma} \rceil  = O(\log n)$ and let $k := n-\ell+1$. 
Let $R$ be the integer formed by the $\ell$ most significant bits of $q$, formally $R := \lfloor q/ 2^k \rfloor$. Since $q \geq 2^n$, $2^{\ell-1} \leq R < 2^{\ell}$. Then, the desired state is, 
   \begin{align}
       \ket{\phi'_y} &:=  \ket{\zeta_{\ell}(R,-2^ky)} \otimes \ket{\eta_k(-y)} \\
       &= \frac{1}{\sqrt{2^kR}} \sum_{z = 0}^{R-1} \sum_{r = 0}^{2^k-1} \omega_q^{-2^kzy-ry} \ket{2^kz + r} \\
       &= \frac{1}{\sqrt{q'}} \sum_{x = 0}^{q'-1} \omega_q^{-xy} \ket{x}
   \end{align}
where $q' := R \cdot 2^{k}$. Indeed, this state is preparable in $\QACZ$ and  $\vlr{\braket{\phi_y | \phi'_y}}^2 = \frac{q'}{q} \geq 1-\frac{2^{k}}{q} \geq 1-\gamma$.
\end{proof}

\paragraph{Proof of \cref{thm:qftfan}.}
Parameterizing as $2^n \leq q < 2^{n+1}$, suppose that $C$ implements $\aqftd$ for $\delta \geq n^{-O(1)}$. We prepare a state $\ket{\nu_q}$ in $\QACZ$ such that $\aqftd \ket{\nu_q}$, which is preparable with $C$, has non-negligible felinity on $n$ qubits.
Set $y_0, y_1$ as in \cref{lem:exactcase}, thus, $\ket{y_1} = \ket{0}\ket{1}^{\otimes n}$, and $\ket{y_0} = \ket{1}\ket{0}^{\otimes n}$ for $q > 2^n$ and $\ket{0}^{\otimes n+1}$ otherwise. If $C$ is exact, $C(\ket{\psi_q})$ already has the required felinity, otherwise proceed as below.

For any state $\ket{\varphi}$ preparable in $\QACZ$, it is known that the reflection about it, $I - 2\kb{\varphi}$, is also implementable in $\QACZ$ (Fact 3.9 \cite{gretta2026polylogarithmicweightdickestatesqac0}). 

Let $A$ be an $n+1$ qubit register and $w_1, w_2$ be two additional qubits. 
Setting $\gamma := 0.001\delta$, for each $b \in \bin$, define the following unitaries using the state from \cref{lem:approx} and the product state $\ket{\+}_{w_1} \ket{b}_{w_2}$.
\begin{align}
    R_b &:= I - 2 \kb{\phi_{y_b}}_A \otimes \kb{\+b}_{w_1, w_2} \\
    R'_b &:= I - 2 \kb{\phi'_{y_b}}_A \otimes \kb{\+b}_{w_1, w_2} 
\end{align}
Note the operator norm $\Vlr{R_b - R'_b} = 2\sqrt{1-\vlr{\braket{\phi'_{y_b} | \phi_{y_b}}}^2} \leq 2\sqrt{\gamma} \leq 0.1\sqrt{\delta}$,  $R'_b$ is implementable in $\QACZ$, and $R'_0, R'_1$ commute because they reflect about orthogonal states. 

Then, for $\ket{\epr}_{w_1,w_2} = \frac{\ket{00}+\ket{11}}{\sqrt{2}}$, one can also prepare $\ket{\nu_q} := R'_0 R'_1 \ket{\psi_q}_A \ket{\epr}_{w_1,w_2}$ in $\QACZ$. It now suffices to argue that the state $\ket{\psi'} := (\aqftd)_A \otimes I_{w_1,w_2} \ket{\nu_q}$ has non-negligible felinity. Below we suppress the $I_{w_1,w_2}$ and $\aqftd$ implicitly acts only on $A$. 

Let $\ket{z_b} := \ket{y_b}_A \ket{(1-b)}_{w_1} \ket{b}_{w_2}$. In binary representation, $z_0,z_1$ are complementary on $\geq n+1$ bits. Observe $R'_{1-b}$ acts as identity on any $\ket{\psi}_{A,w_1} \ket{b}_{w_2}$ and $\ket{1-b}_{w_1}\ket{b}_{w_2}$ is orthogonal to $\ket{\epr}$ for both values of $b$. 
Putting it all together,  
\begin{align}
     \vlr{\braket{z_b | \aqftd | \nu_q}} &= \vlr{\bra{z_b} \cdot \aqftd \cdot  R'_b \ket{\psi_q}_A \ket{\epr}_{w_1,w_2}} \\
     &\geq \vlr{\bra{z_b} \aqftd \cdot R_b \cdot \ket{\psi_q} \ket{\epr}} - \underbrace{\Vlr{R_b - R'_b}}_{\leq 0.1 \sqrt{\delta}} \\
     &\geq \vlr{2 \bra{z_b} (\aqftd)_A \cdot \kb{\phi_{y_b}}_A \kb{\+b}_{w_1,w_2} \cdot \ket{\psi_q}_A \ket{\epr}_{w_1,w_2}} - 0.1\sqrt{\delta} \\
     &= \vlr{2 \braket{y_b| \aqftd | \phi_{y_b}} \braket{\phi_{y_b} | \psi_q} \braket{(1-b)b|\+b} \braket{\+b|\epr}} - 0.1\sqrt{\delta} \\
     &\geq \vlr{\braket{y_b | \aqftd \cdot \QFT_q^\dag | y_b}} \cdot 1/(\pi\sqrt{2})  - 0.1\sqrt{\delta} \\
     &\geq  0.11 \sqrt{\delta} 
\end{align}
Dropping $w_1, w_2$ and the leftmost qubit $a_0$, $\feln(\tr_{a_0w_1w_2} \kb{\psi'}) \geq 4 \vlr{\braket{z_0 | \psi'}}^2 \vlr{\braket{z_1 | \psi'}}^2 = \Omega(\delta^2)$, which is non-negligible. From \cite{gretta2026paritynotinqac0iff}, preparing $\ket{\psi'}$ in $\QACZ$ implies $\FANOUT_n \in \QACZ$. This concludes the proof.

\subsection{Local reduction for the Shor modulus.}
For a single-qubit unitary $U$, $\ctrl{U}(a,b)$ denotes the controlled $U$ operation with qubit $a$ as the control and $b$ as the target. Then, we show the following depth $k+2$ unitary approximates fanout. 

\begin{lemma}[Shor Case]\label{lem:localfanshor}
Let $A$ be a set of $n$ qubits representing a qudit register, labeled $a_1 , a_2 \dots a_n$  left to right and let $t$ be a input register and let $N = 2^n$. Define $P_s := \mathrm{diag}(1,e^{-2\pi i/ 2^s})$ and
$$U_f^{\id{k}} := \QFT^\dag_{N}(A)  \lr{\Pi_{s \in [k]} \ctrl{P_{s}}(t,a_s) }  H^{\tens n}_A$$
Then, for $1 \leq k \leq n$, $U_f\id{k} \ket{0}_t \ket{0}^{\otimes n}_A = \ket{0}_t \ket{0}^{\otimes n}_A$ and $ U_f\id{k} \ket{1}_t \ket{0}^{\otimes n}_A$ has at least $1-4^{1-k}$ fidelity with $\ket{1}_t \ket{1}^{\otimes n}_A$. 
\end{lemma}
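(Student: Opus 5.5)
Since $t$ is only a control, I will analyze the two control branches separately, tracking the register $A$ exactly through each layer. For $t=0$, the controlled phases act trivially, so $U_f\id{k}$ reduces to $\QFT^\dag_N H^{\tens n}\ket{0}^{\tens n}$. Here $H^{\tens n}\ket{0}^{\tens n} = \frac{1}{\sqrt N}\sum_x \ket{x} = \QFT_N\ket{0}$, so applying $\QFT^\dag_N$ returns exactly $\ket{0}^{\tens n}$. That settles the first claim.

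For $t=1$, write $x = \sum_{s=1}^n x_s 2^{n-s}$, where $x_s$ is the bit held by $a_s$ (with $a_1$ the MSB). The gate $\ctrl{P_s}(t,a_s)$ multiplies $\ket{x}$ by $e^{-2\pi i x_s/2^s}$. If all $n$ phases were applied ($k=n$), the total phase would be $e^{-2\pi i \sum_s x_s 2^{-s}} = \omega_N^{-x}$. In that case the state after the phases is $\frac{1}{\sqrt N}\sum_x \omega_N^{-x}\ket{x} = \QFT_N\ket{N-1}$, and $\QFT^\dag_N$ maps it exactly to $\ket{N-1} = \ket{1}^{\tens n}$, so the fidelity is $1$. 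For general $k$, the applied phase is $\omega_N^{-x}\cdot e^{2\pi i \theta(x)}$, where $\theta(x) := \sum_{s>k} x_s 2^{-s} = (x \bmod 2^{n-k})/N \in [0, 2^{-k})$.

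The fidelity is therefore
\begin{align}
F = \vlr{\frac{1}{N}\sum_{x} e^{2\pi i \theta(x)}}^2 .
\end{align}
To see this, note that $\bra{N-1}\QFT^\dag_N = (\QFT_N\ket{N-1})^\dag$ has coefficients $\frac{1}{\sqrt N}\omega_N^{x}$, which cancel the factor $\omega_N^{-x}$. The sum factorizes over the truncated low bits: the high bits contribute a factor of $1$, leaving $F = \vlr{2^{-m}\sum_{r=0}^{2^m-1} e^{2\pi i r/N}}^2$ with $m = n-k$. Equivalently, $F$ is a product of single-qubit factors $\prod_{s>k}\cos^2(\pi/2^s)$, because each omitted phase leaves qubit $a_s$ in $(\ket0+e^{2\pi i/2^s}\ket1)/\sqrt2$ relative to the ideal. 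I will use the product form since it is cleaner. One can also evaluate the sum with the Dirichlet identity \eqref{eq:dirichlet}, which gives $F = \frac{\sin^2(\pi 2^{m}/N)}{4^m\sin^2(\pi/N)}$.

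The remaining step, and the only real obstacle, is the bound $\prod_{s=k+1}^n \cos^2(\pi 2^{-s}) \geq 1 - 4^{1-k}$. I will use $\cos^2 u \geq 1-u^2$ together with $\prod(1-\epsilon_s)\geq 1-\sum\epsilon_s$. This gives
\begin{align}
F \geq 1 - \pi^2\sum_{s>k}4^{-s} \geq 1 - \frac{\pi^2}{3}4^{-k} \geq 1-4^{1-k},
\end{align}
since $\pi^2/3 < 4$.

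Finally, I will check the depth claim. The controlled phases $\ctrl{P_s}(t,a_s)$ all share the control $t$, so they are sequential and take depth $k$. Together with one layer of $H$ and the single $\QFT^\dag_N$ gate, this gives depth $k+2$.
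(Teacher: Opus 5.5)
Your proof is correct and follows the paper's argument: the $t=0$ branch is handled identically, and your amplitude $2^{-m}\sum_{r=0}^{2^m-1}\omega_N^{r}$ is exactly the paper's $\frac{1}{M}D_N(M)$ with $M=2^{n-k}$. The only difference is the final estimate, where you bound the per-qubit factorization $\prod_{s>k}\cos^2(\pi/2^s)$ using $\cos^2 u\ge 1-u^2$, while the paper uses the Dirichlet closed form and a third-order Taylor bound on $\sin$; both give the same $1-\frac{\pi^2}{3}4^{-k}\ge 1-4^{1-k}$.
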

\begin{proof}
Observe that the actions of $H^{\tens n}$ and $\QFT_N$ are identical on $\ket{0}^{\otimes n}$. When the input is $\ket{0}$, the map collapses to $\QFT^\dag_N H^{\tens n}_A \ket{0}^{\otimes n} = \QFT^\dag_N \QFT_N \ket{0}^{\otimes n} = \ket{0}^{\otimes n}$, as intended. 

Define $\ket{\eta_k}_{A} :=  \bigotimes_{s \in [k]} P_{s}(a_s) \cdot \ket{+}^{\tens n}_A$. 
Letting  $M := 2^{n-k}$, 
\begin{align}
    \ket{\eta_k} &= \lr{\bigotimes_{s \in [k]} \frac{\ket{0} + \omega^{-2^{n-s}}_N \ket{1}}{\sqrt{2}}}_{a_1\dots a_k} \otimes \ket{+}^{\otimes n-k}_{a_{k+1} \dots a_n} \\
    &= 2^{-n/2} \sum_{j = 0}^{2^k-1} \sum_{r = 0}^{M-1} \omega^{-Mj}_N \ket{Mj + r}_A.
\end{align}
Observe that $\ket{1}^{\otimes n}_A = \ket{N-1}_A$. Therefore, when the input is $\ket{1}$, 
\begin{align}
\bra{1}_t \bra{N-1}_A  U_f\id{k} \ket{1}_t \ket{0}_{A} &=  
\bra{N-1}_A \cdot \lr{\bra{1}_t \cdot U_f\id{k} \cdot \ket{1}_t} \ket{0^n}_A  \\
&= \braket{N-1 | \QFT^{\dag}_N | \eta_k} \\
&= 2^{-n} \sum_{j = 0}^{2^k-1} \sum_{r = 0}^{M-1} \omega_N^{Mj+r} \omega_N^{-Mj} \\
&= 2^{-n+k} \sum_{r = 0}^{M-1} \omega_N^{r} \\
&= \frac{1}{M} D_N(M)
\end{align}
Using \cref{eq:dirichlet} and third order taylor approximation of $\sin M \theta$ with $\theta = \pi/N$, 
\begin{align}
   \vlr{\bra{1}_t \bra{N-1}_A  U_f\id{k} \ket{1}_t \ket{0}_{A}} &\geq \frac{1}{M \theta} \lr{M\theta - \frac{M^3\theta^3}{6}}\\ 
   &\geq 1 - \frac{M^2 \pi^2}{6\cdot N^2} \\
   &\geq 1- \frac{\pi^2}{6\cdot 4^k}
\end{align}
Therefore, the fidelity is $\geq 1- \frac{\pi^2}{3 \cdot 4^k} \geq 1- 4^{1-k}$ as claimed.
\end{proof}

\ifnum\ANON=1
\else
\fi

\section{Acknowledgements.} 
The authors are grateful to Meghal Gupta for enlightening discussions about $\QACZ$ and to Angelos Pelecanos for his wisdom. 
The authors also acknowledge OpenAI’s ChatGPT (GPT-5.6 Pro) for helpful exchanges.
\appendix
\bibliographystyle{alpha}
\bibliography{main}
\section{Exact QFT using Fanout}\label[appendix]{sec:ampamp}
We provide more details on exact version of $\QFT_q \in \QACZF$, which is known folklore.  

First note that $\QACZF = \QNC^0_f = \mathsf{QTC}^0$ \cite{hoyer2005fanout, takahashi2012collapse}, and thus the $\TC^0$ arithmetic in Høyer–Špalek can be performed exactly. For power-of-two moduli, the only remaining error is in the phase estimation part. This is exactly what \cite{mosca2003exactquantumfouriertransforms} addresses by making this error uniform and removing it with exact amplitude amplification. 

For non-power-of-two moduli, the missing piece is an exact preparation of $\ket{\phi_0} := \frac{1}{\sqrt{2}} \sum_{x = 0}^{q-1} \ket{x}$. One can prepare $\ket{\phi_0}$ in $\QACZF$ as follows: (1) Mark the $x \le q$ branches of $\ket{+}^{\otimes \lceil \log_2 q \rceil}$ using $n$-bit comparator (available in  $\AC0 / \TC0$), (2) perform exact amplitude amplification to get $\ket{\phi_0}$ in $O(1)$ depth. This is same as the first part of \cref{lem:approx} but with $n$-bit comparators. 

\end{document}